\newcolumntype{P}[1]{>{\centering\arraybackslash}p{#1}}
\theoremstyle{definition}
\newcommand{\calA}{\mathcal{A}}
\newcommand{\calP}{\mathcal{P}}
\newcommand{\calX}{\mathcal{X}}
\newcommand{\calG}{\mathcal{G}}
\newcommand{\calH}{\mathcal{H}}
\newcommand{\calI}{\mathcal{I}}
\newcommand{\bfT}{\mathbf{T}}
\newcommand{\bfG}{\mathbf{G}}
\newcommand{\bfg}{\mathbf{g}}
\newcommand{\bfu}{\mathbf{u}}
\newtheorem{lemma}{Lemma}
\newtheorem{definition}{Definition}
\newtheorem{example}{Example}
\begin{document}

\title{Leveraging Coding Techniques for Speeding up Distributed Computing}

\author{\IEEEauthorblockN{Konstantinos Konstantinidis and Aditya Ramamoorthy}
\IEEEauthorblockA{Department of Electrical and Computer Engineering\\
Iowa State University\\
Ames, IA 50010\\
Emails: \{kostas, adityar\}@iastate.edu}
}

\maketitle

\begin{abstract}
Over the last decade, distributed computing frameworks such as
MapReduce, Hadoop and Spark have become ubiquitous. Large scale
clusters routinely process data that are on the orders of petabytes
or more. The sheer size of the data precludes the processing of the
data on a single computer. The philosophy in these methods is to
partition the overall job into smaller tasks that are executed on
different servers; this is called the map phase. This is followed
by a data shuffling phase where appropriate data is exchanged
between the servers. The final so-called reduce phase, completes
the computation.

One potential approach, explored in prior work for reducing the
overall execution time is to operate on a natural tradeoff between
computation and communication. Specifically, the idea is to run
redundant copies of map tasks that are placed on judiciously chosen
servers. The shuffle phase exploits the location of the nodes and
utilizes coded transmission. The main drawback of this approach is
that it requires the original job to be split into a number of map
tasks that grows exponentially in the system parameters. This is
problematic, as we demonstrate that splitting jobs too finely can
in fact adversely affect the overall execution time.

In this work we show that one can simultaneously obtain low
communication loads while ensuring that jobs do not need to be
split too finely. Our approach uncovers a deep relationship between
this problem and a class of combinatorial structures called
resolvable designs. Appropriate interpretation of resolvable
designs can allow for the development of coded distributed
computing schemes where the splitting levels are exponentially
lower than prior work. We present experimental results obtained on
Amazon EC2 clusters for a widely known distributed algorithm,
namely TeraSort. We obtain over 4.69$\times$ improvement in speedup over the baseline approach  and more than 2.6$\times$ over current state of the art.

\end{abstract}

%
\IEEEpeerreviewmaketitle

\section{Introduction}
\label{sec:intro}

In recent years, there has been a surge in the usage of various cluster computing frameworks such as MapReduce \cite{DeanG08}, Hadoop \cite{HADOOP} and Spark \cite{apache_spark}. The era of bigdata analytics \cite{Cuzzocrea:2011:AOL:2064676.2064695} whereby a large amount of data needs to be processed in a fast manner has fueled this growth. In these applications, datasets are often so large that they cannot be housed in the memory and/or the disk of any one computer. Thus, the data is typically distributed across a number of nodes. Each node performs its own local computation, following which there is a phase where the nodes communicate amongst themselves. At this point the nodes perform the final computation. Henceforth, we refer to this as the MapReduce framework.

The MapReduce framework has proven to be quite versatile and large scale clusters in industry and academia routinely process terabytes of data using this approach. It is important to note that the framework intertwines computation and communication. Specifically, multiple servers allow for parallel computation; yet data needs to be exchanged between them to complete the processing of the job. It is well-recognized that the data shuffling phase that occurs between the map and the reduce steps, results in a significant amount of data movement. In fact, the data shuffling phase limits the performance of several applications \cite{GuoRZ13}. Reference \cite{Chowdhury_etal11} suggests that for distributed machine learning algorithms\\

{\it ``... these transfers can have a significant impact on job performance accounting for more than $50$\% of the job completion times ..."}\\

There have been several papers on the topic of reducing the impact of the shuffle phase on the overall execution of a MapReduce job \cite{Chowdhury_etal11,Ahmad_2014,Cao_2016,Wang_2013,LiMAAllerton15,LiMA16,LiSMA17}.
Our work in this paper examines this problem within the computation vs. communication tradeoff. This tradeoff was first explored in the work of \cite{LiMAAllerton15,LiMA16,LiSMA17}. The ideas in these works have their origins in the problem of coded caching \cite{MAMAUNFLC}. Their work considers a model of MapReduce and defines appropriate notions of computation and communication loads. The key finding of their work is that the judicious usage of coded transmissions in the shuffle phase can fundamentally reduce the induced communication load.

The approach of \cite{LiMAAllerton15,LiMA16,LiSMA17} crucially relies on subdividing the file on which a given job is run into a large number of subfiles. In this work, our starting point is the realization that in practice this so-called subpacketization can rather adversely affect the overall execution time. Accordingly, we propose alternate mechanisms that allow us to seamlessly tradeoff computation vs. communication, but with acceptable levels of subpacketization. Our mechanisms are based on a natural link between error correcting codes, combinatorial objects known as resolvable designs and MapReduce-like protocols (related links in a different context were explored in \cite{TangR17, TangR17_preprint}). We present exhaustive experimental comparisons with prior work that demonstrate the efficacy of our method.

We note here that in recent years there have been several papers that have utilized the ideas from coding theory in various distributed storage, cloud computing and distributed computing tasks. While the original motivation for codes was protection against channel errors it turns out that interpreting them in the correct way can often result in interesting conclusions in different domains. Our work fits squarely within this overall picture. We use error correcting codes as a combinatorial object in our work to specify a MapReduce-like distributed computing protocol, which offers a powerful computation vs. communication tradeoff.

This paper is organized as follows. Section \ref{sec:formulation} discusses our problem formulation. The specification of our protocol and its corresponding analysis appear in Section \ref{sec:scheme}. We have implemented our scheme on Amazon EC2 clusters; details of this implementation can be found in Section \ref{sec:implementation}. Section \ref{sec:results} discusses our experimental results and Section \ref{sec:conclusions} concludes the paper.
\section{Problem Formulation}
\label{sec:formulation}
We now discuss the problem formulation more formally. Our presentation here is based closely on \cite{LiMA16}. There are $N$ input subfiles that correspond to disjoint parts of the entire file to be processed. There are $Q$ arbitrary output functions that need to be computed across these $N$ subfiles. There are a total of $K$ servers. For instance, in a word counting example, the subfiles could be the chapters of a book and the output functions are the word counts of a specific set of words. The subfiles will be denoted by $w_1, \dots, w_N$ and the output functions $\phi_j, j = 1, \dots, Q$. Each function $\phi_j$ depends on all the subfiles $w_1, \dots, w_N$.
We assume that the $j$-th function can be computed by a map phase followed by a reduce phase, i.e.,
\begin{align*}
\phi_j(w_1, \dots, w_N) = h_j(g_{j,1}(w_1), \dots, g_{j,N}(w_N)).
\end{align*}
Here, $\bfg_n = (g_{1,n}, \dots, g_{Q,n})$ ``maps" the subfile $w_n$ into $Q$ intermediate values $\nu_{j,n}, j = 1, \dots, Q$ each of which is assumed to be of size $B$ bits. The function $h_j$ maps the intermediate values $\nu_{j,n}$ on all subfiles into a reduced value $h_j(g_{j,1}(w_1), \dots, g_{j,N}(w_N))$.

\begin{example}
\label{eg:example_1}
Suppose that we consider the problem of counting $Q=4$ words of a collection $\calA = \{\text{and, if, when, the}\}$ in a book consisting of $N=4$ chapters in a cluster with $K=4$ servers. In this case the subfiles $w_1, \dots, w_4$ are the chapters and $\phi_i, i = 1, \dots, Q$ correspond to the counts of the words in $\calA$ in the entire book, e.g., $\phi_1(w_1, \dots, w_4)$ would be the number of occurrences of the word ``and" in the book. Suppose that we define $\bfg_n$ to be the function that returns the counts of all the words in $\calA$ in chapter $w_n$. Let us assume that the $i$-th slave is assigned subfile $w_i$ for all values of $i$. In this case it is evident that in the map phase, server $i$ computes $\bfg_i$ on its assigned subfile $w_i$ for $i = 1, \dots, 4$. In the reduce phase, each server is given the responsibility of finding the overall count in the book of one specific word, e.g., suppose that server 1 reduces the word ``and". In this case, it is evident that $\phi_1(w_1, \dots, w_N)$ can be computed as
\begin{align*}
\phi_1(w_1, \dots, w_N) = h_1(g_{1,1}(w_1), \dots, g_{1,N}(w_N)).
\end{align*}
In particular, the function $h_1$ simply corresponds to the sum of the counts of ``and" on the individual chapters.
\end{example}

We note here that the decomposition of a given job into map and reduce phases is not unique, i.e., these are design choices.

As noted in Section \ref{sec:intro}, there are several MapReduce jobs where the shuffle phase is rather time-intensive and can actually dominate the overall job execution time. Thus, it makes sense to operate on a tradeoff between communication and computation, i.e., one could choose to increase the computation load of the system by processing the same subfile at $r > 1$ servers. This would in turn reduce the number of intermediate values it needs in the reduce phase. Moreover, doing this systematically opens up possibilities for exploiting coded transmissions that can further reduce the communication load.  For a given network throughput rate, a lower communication load translates into lesser time taken in the shuffle phase. Thus, depending on the job characteristics it is plausible that the overall execution time of a given job can be reduced, i.e., the potential increase in the map phase execution time may be offset by the reduced shuffle phase time. Another important aspect here is the choice of the number of subfiles $N$. Note that this is also a design choice, e.g., in word counting, the subfiles can be chapters or pages of the book. As we will see (see Section \ref{sec:results}), there are performance issues if the number of subfiles is too large. This forms the main motivation of our work. In particular, our proposed scheme leverages much of the gains of a higher computation load, while operating with a much lower number of subfiles.

For the remainder of the paper, we refer to $r$ as the computation load. 
\begin{definition}
The communication load $L \in [0,1]$ of a certain scheme is defined as the ratio of the total number of bits transmitted in the data shuffling phase to $QNB$.
\end{definition}
In Example \ref{eg:example_1}, at the end of the map phase, each node needs three values from the other nodes. Thus, the total number of bits transmitted would be $4 \times 3 \times B = 12B$. Thus, the communication load of the system will be $L = 12B/16B = 3/4$.

\begin{figure}[t]
\centering
\includegraphics[scale=0.18]{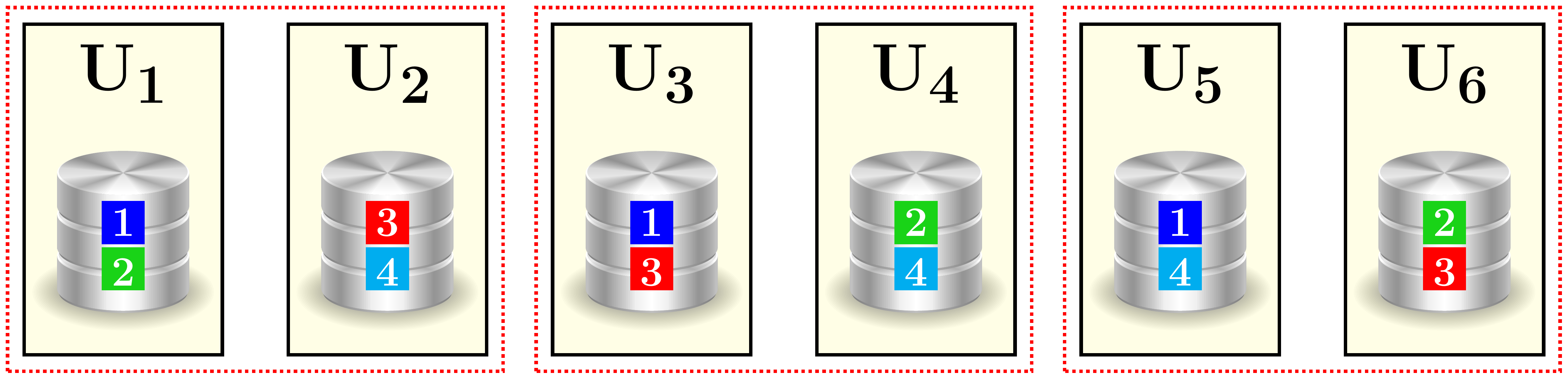}
\caption{Proposed placement scheme for $K=6$ servers and $N=4$ subfiles, denoted $\{1,2,3,4\}$ and represented by colored squares, each assigned to some server. The red dotted boxes show the partition of the files into parallel classes.}
\label{fig:resolv_eg}
\vspace{-0.2in}
\end{figure}


We now present an example which demonstrates that increasing $r$ can translate into lower communication loads.
\begin{example}
\label{eg:N4_example}
Consider a system with $K=6$ servers, a computation load of $r=3$ and $Q=6$ functions (e.g., word counts) that need to be computed. In our approach we would subdivide the original job into $N=4$ subfiles (more generally any multiple of $4$ can be used) that will be assigned to the servers as demonstrated in Fig. \ref{fig:resolv_eg}. At the end of the map step, each server would have computed the $Q$ functions on its assigned map tasks. Suppose that the $i$-th server is responsible for reducing the $i$-th function. This would imply, for example, that server $U_1$ needs the first function's evaluation on subfiles $w_3$ and $w_4$.

The key idea of our approach is for each server to transmit a packet that is simultaneously useful to multiple servers. For example, let us consider the group of servers $G_1=\{U_1,U_3,U_6\}$ that were assigned subfiles $\{w_1, w_2\}, \{w_1,w_3\}$ and $\{w_2, w_3\}$, respectively. Then it is evident that at the end of the map phase, server $U_1$ wants $\nu_{1,3}$, server $U_3$ wants $\nu_{3,2}$ and server $U_6$ wants $\nu_{6,1}$. We assume that $\nu_{j,n}$ can be encapsulated into a packet with size $B$ bits, denoted by $p(\nu_{j,n})$. Furthermore, assume that this packet can be subdivided into two parts $p(\nu_{j,n})[1]$ and $p(\nu_{j,n})[2]$ (with size $B/2$ bits)\footnote{Strictly speaking, such an assumption will not hold, owing to the bits allocated to header information etc. Nevertheless, we make this assumption to keep the discussion simple.}.


Now consider the set of transmissions specified in Table \ref{table:groups_eg1}. Note that server $U_1$ contains subfiles $w_1$ and $w_2$ and can therefore compute all $Q$ functions associated with them. Thus, it can transmit $p(\nu_{3,2})[1] \oplus p(\nu_{6,1})[2]$ as specified in row 1 of the table. Furthermore, it can be observed that this transmission is {\it simultaneously} useful to both servers $U_3$ and $U_6$. In particular, server $U_3$ already knows $p(\nu_{6,1})[2]$ and can therefore decode $p(\nu_{3,2})[1]$ which it wants. Likewise, server $U_6$ already knows $p(\nu_{3,2})[1]$ and can decode $p(\nu_{6,1})[2]$ that it wants. In a similar manner, it can be verified that each of the transmissions in Table \ref{table:groups_eg1} benefit two servers of the corresponding group. It turns out that the process of picking the servers to consider together can be made systematic; in addition to server group $G_1$ that we just considered, we can pick three others: $G_2=\{U_1,U_4,U_5\}$, $G_3=\{U_2,U_3,U_5\}$ and $G_4=\{U_2,U_4,U_6\}$ which will result in all the servers obtaining their desired values.

The total number of bits transmitted in this case is therefore $4 \times 3 \times B/2 = 6B$; thus, the corresponding communication load is $\frac{6B}{QNB} = 0.25$. In contrast, uncoded transmission from the different nodes would have required a total of $2 \times 6 \times B = 12B$ bits to be transmitted, corresponding to a communication load of $0.5$. Thus, the proposed approach reduces the communication load in the shuffle phase by half.
\end{example}

\begin{table}[t]
\centering
\caption{Coded transmissions within all groups of Example \ref{eg:N4_example}}
\begin{tabular}{ |c|c|c| }
\hline
Group&Server&Transmission\\
\hline
\multirow{3}{*}{$G_1$}
&$U_1$&$p(v_{3,2})[1]\oplus p(v_{6,1})[2]$\\
&$U_3$&$p(v_{6,1})[1]\oplus p(v_{1,3})[2]$\\
&$U_6$&$p(v_{3,2})[2]\oplus p(v_{1,3})[1]$\\
\hline
\multirow{3}{*}{$G_2$}
&$U_1$&$p(v_{5,2})[1]\oplus p(v_{4,1})[1]$\\
&$U_4$&$p(v_{5,2})[2]\oplus p(v_{1,4})[1]$\\
&$U_5$&$p(v_{4,1})[2]\oplus p(v_{1,4})[2]$\\
\hline
\multirow{3}{*}{$G_3$}
&$U_2$&$p(v_{5,3})[1]\oplus p(v_{3,4})[1]$\\
&$U_3$&$p(v_{5,3})[2]\oplus p(v_{1,1})[1]$\\
&$U_5$&$p(v_{3,4})[2]\oplus p(v_{1,1})[2]$\\
\hline
\multirow{3}{*}{$G_4$}
&$U_2$&$p(v_{6,4})[1]\oplus p(v_{4,3})[1]$\\
&$U_4$&$p(v_{6,4})[2]\oplus p(v_{2,2})[1]$\\
&$U_6$&$p(v_{2,2})[2]\oplus p(v_{4,3})[2]$\\
\hline
\end{tabular}
\label{table:groups_eg1}
\vspace{-0.2in}
\end{table}

We note here that the original work of \cite{LiMA16} promises a communication load of $L_{coded}(r) = \frac{1}{r} (1 - \frac{r}{K})$ for $r \in \{1, \dots, K\}$. However, crucially this result assumes that
\begin{align*}
N &= \binom{K}{r} \eta_1, \text{~where $\eta_1$ is a positive integer.}
\end{align*}
It is evident that $N$ grows very rapidly for their scheme. In Section \ref{sec:results}, we demonstrate that in real-life experiments the idealized analysis does not hold and the execution time suffers as a result of this.

For instance, for Example \ref{eg:N4_example} above, their communication load would be $1/6$ (which is lower). However, their approach would require the original file to be split into $\binom{6}{3} = 20$ subfiles, i.e., their scheme only works when $N = 20$. In contrast, the scheme proposed in Example \ref{eg:N4_example} works with $N=4$ which is much smaller.

In this work, we present significant generalizations of the basic approach in Example \ref{eg:N4_example}. We present protocols for subpacketizing a given file into $N$ subfiles. These subfiles are then distributed over $K$ servers. The servers work together within the Map/Shuffle/Reduce framework to compute $Q$ different functions on the original file. The system operates at a computation load of $r$. In the upcoming Section \ref{sec:scheme}, we present the details of our approach.
Our proposed protocol has been implemented within a cloud based cluster set up within Amazon Web Services (AWS). In Section \ref{sec:results}, we demonstrate experimental results that demonstrate the overall savings in execution time that our technique enjoys over an uncoded system and over previous approaches. In this section, we also discuss several implementation related issues and discuss the effect of the nature of the underlying computation on the overall execution time.

\section{Improved Schemes For Coded Distributed Computation from Resolvable Designs}
\label{sec:scheme}

In this section, we present the details of our proposed approach.
\subsection{Primer on resolvable designs}
We begin with some notions from combinatorial design theory \cite{DRSCDCA}.
\theoremstyle{definition}
\begin{definition}
A \textit{design} is a pair ($X$,$\mathcal{A}$) consisting of
\begin{enumerate}
\item a set of elements (\textit{points}), $X$, and
\item a family $\mathcal{A}$ (i.e. multiset) of nonempty subsets of $X$ called \textit{blocks}, where each block has the same cardinality.
\end{enumerate}
\end{definition}
Thus, a design is simply a set system, where each set (or block) has the same cardinality. It turns out that examining designs that have specific structure is especially useful in the distributed computing context. In this work, we will make use of \textit{resolvable designs}, a special category of block designs.
\begin{definition}
A subset $\mathcal{P}\subset X$ in a design $(X,\mathcal{A})$ is said to be a \textit{parallel class} if for $X_i\in\mathcal{P}$ and  $X_j\in\mathcal{P}$ with $i\neq j$ we have $X_i\cap X_j=\emptyset$  and $\cup_{\{j:X_j\in P\}}X_j=X$. A partition of $\mathcal{A}$ into several parallel classes is called a resolution, and $(X,\mathcal{A})$ is said to be a resolvable design if $\mathcal{A}$ has at least one resolution.
\end{definition}
A simple example of a resolvable design is obtained by considering all $2$-subsets of $\{1, \dots, 4\}$.
\begin{example}
\label{eg:subsets of_four}
Let $X=\{1,2,3,4\}$ and  $\mathcal{A}=\{\{1,2\},\{3,4\},\{1,3\},\{2,4\},\{1,4\},\{2,4\}\}$. The $(X, \calA)$ forms a resolvable design with the following parallel classes.
\begin{IEEEeqnarray*}{rCl}
\mathcal{P}_1&=&\{\{1,2\},\{3,4\}\},\\
\mathcal{P}_2&=&\{\{1,3\},\{2,4\}\}, \text{~and}\\
\mathcal{P}_3&=&\{\{1,4\},\{2,3\}\}.
\end{IEEEeqnarray*}
We note here that Example \ref{eg:N4_example} used precisely this design, when specifying the subpacketization and the placement.
\end{example}
It turns out that there is a systematic procedure for constructing resolvable designs, where the starting point is an error correcting code \cite{lincostello}. We explain this procedure below.

Let $\mathbb{Z}_q$ denote the additive group of integers modulo $q$. The generator matrix of an $(k,k-1)$ single parity-check (SPC) code over $\mathbb{Z}_q$\footnote{We emphasize that this construction works even if $q$ is not a prime, i.e., $\mathbb{Z}_q$ is not a field.} is defined by
\begin{equation}
\mathbf{G}_{SPC}=
\begin{bmatrix}
& &\vline&1\\
&\huge \mathbf{I}_{k-1}&\vline&\vdots\\
&&\vline&1
\end{bmatrix}.
\label{eqn:spcgen}
\end{equation}
This code has $q^{k-1}$ codewords. The codewords are $\mathbf{c} = \mathbf{u} \cdot \mathbf{G}_{SPC}$ for each possible message vector $\mathbf{u}$. The code is systematic so that the first $k-1$ symbols of each codeword are the same as the symbols of the message vector. The $q^{k-1}$ codewords $\mathbf{c}_i$ computed in this manner are stacked into the columns of a matrix $\mathbf{T}$ of size $k\times q^{k-1}$.
\begin{equation}
\mathbf{T}=[{\mathbf{c}}_1^T,{\mathbf{c}}_2^T,\cdots,{\mathbf{c}}_{q^{k-1}}^T].
\label{eqn:T}
\end{equation}
The corresponding resolvable design is constructed as follows. Let $X_{SPC} = [q^{k-1}]$ (we use $[n]$ to denote the set $\{1,2,\dots,n\}$ throughout) represent the point set of the design.
We define the blocks as follows. For $0 \leq l \leq q-1$, let $B_{i,l}$ be a block defined as
$$B_{i,l}=\{j: \mathbf{T}_{i,j}=l\}.$$
The set of blocks $\mathcal{A}_{SPC}$ is given by the collection of all $B_{i,l}$ for $1 \leq i \leq k$ and $0 \leq l \leq q-1$ so that $|\mathcal{A}_{SPC}| = kq$. The following lemma (see Appendix for proof) shows that
this construction always yields a resolvable design. This lemma was proved in \cite{8007038} in a different context.
\theoremstyle{lemma}
\begin{lemma}
\label{lemma:cons_yields_resolv_design}
The above scheme always yields a resolvable design $(X_{SPC},\mathcal{A}_{SPC})$ with $X_{SPC}=[q^{k-1}]$, $|B_{i,l}|=q^{k-2}$ for all $1\leq i\leq k$ and $0\leq l\leq q-1$. The parallel classes are analytically described by $\mathcal{P}_i=\{B_{i,l}:0\leq l\leq q-1\}$, for $1\leq i\leq k$.
\end{lemma}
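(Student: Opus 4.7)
The plan is to verify three things in turn: (i) every block $B_{i,l}$ has cardinality exactly $q^{k-2}$, (ii) for each fixed row index $i$, the collection $\mathcal{P}_i=\{B_{i,l}:0\leq l\leq q-1\}$ partitions the point set $X_{SPC}=[q^{k-1}]$, and (iii) these partitions together exhaust $\mathcal{A}_{SPC}$, yielding a resolution.

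For (i), I would exploit the systematic form of $\mathbf{G}_{SPC}$ given in (\ref{eqn:spcgen}). For $1\leq i\leq k-1$ the $i$-th coordinate of a codeword $\mathbf{c}=\mathbf{u}\cdot\mathbf{G}_{SPC}$ is just the message symbol $u_i$, so the number of codewords with $\mathbf{T}_{i,j}=l$ equals the number of message vectors $\mathbf{u}\in\mathbb{Z}_q^{k-1}$ with $u_i=l$; this is $q^{k-2}$ since the remaining $k-2$ coordinates are free. For $i=k$ the parity coordinate is $\sum_{t=1}^{k-1}u_t \bmod q$; for any fixed choice of $u_1,\dots,u_{k-2}$ there is a unique $u_{k-1}\in\mathbb{Z}_q$ making the sum equal to $l$, again giving $q^{k-2}$ codewords.

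Step (ii) is then almost immediate from the observation that every column index $j\in[q^{k-1}]$ lies in exactly one $B_{i,l}$ for fixed $i$, namely the one with $l=\mathbf{T}_{i,j}\in\mathbb{Z}_q$. Hence the blocks $B_{i,0},\dots,B_{i,q-1}$ are pairwise disjoint and their union is $[q^{k-1}]$, so $\mathcal{P}_i$ is a parallel class. Combined with (i), this also confirms that every block has the same cardinality $q^{k-2}$, verifying that $(X_{SPC},\mathcal{A}_{SPC})$ meets the definition of a design in the first place. Step (iii) then follows because $\mathcal{A}_{SPC}=\bigcup_{i=1}^{k}\mathcal{P}_i$ by construction, so the $\mathcal{P}_i$'s constitute the claimed resolution.

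I do not anticipate any substantive obstacle here; the argument rides entirely on the systematic structure of the SPC code together with the fact that we only ever need to solve a single linear equation of the form $u_{k-1}=l-\sum_{t=1}^{k-2}u_t$ in $\mathbb{Z}_q$, which has a unique solution even when $q$ is composite. The one subtlety worth flagging in the write-up is exactly this point for the parity coordinate $i=k$, so that the reader sees the construction does not rely on $\mathbb{Z}_q$ being a field, matching the footnote accompanying (\ref{eqn:spcgen}).
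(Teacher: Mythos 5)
Your proposal is correct and follows essentially the same route as the paper's proof: use the systematic structure so that for $1\leq i\leq k-1$ the block $B_{i,l}$ counts message vectors with $u_i=l$ (giving $q^{k-2}$), handle the parity row $i=k$ by noting that $u_{k-1}=l-\sum_{t=1}^{k-2}u_t$ has a unique solution in $\mathbb{Z}_q$, and observe that for fixed $i$ the blocks over $l\in\{0,\dots,q-1\}$ partition the column indices, yielding the parallel classes. Your explicit remark that the argument does not require $\mathbb{Z}_q$ to be a field is a nice touch consistent with the paper's footnote, but it is not a substantive departure.
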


\begin{example}
\label{ex:classicspc}

The generator matrix of this $(3,2)$ SPC code over $\mathbb{Z}_2$ (binary) is given by
$$\mathbf{G}_{SPC}=
\begin{bmatrix}
1&0&1\\
0&1&1
\end{bmatrix}.
$$
The matrix $\mathbf{T}$ can be obtained as
$$\mathbf{T}=[{\mathbf{c}}_1^T,{\mathbf{c}}_2^T,{\mathbf{c}}_3^T,{\mathbf{c}}_4^T]=\begin{bmatrix}
0&0&1&1\\
0&1&0&1\\
0&1&1&0
\end{bmatrix}.
$$
It can be observed, e.g., that $B_{1,0} = \{1,2\}$ and $B_{1,1} = \{3,4\}$ so that they form a parallel class. In fact, this construction returns the resolvable design considered in Example \ref{eg:subsets of_four}.
\end{example}

\subsection{From resolvable designs to protocol specification}

The main idea in our work is to use an appropriate resolvable design to specify the number of subfiles, the map task assignments and the messages transmitted in the shuffle phase
for a given distributed computing job. We explain this correspondence next.

\begin{algorithm}[!t]
\caption{Proposed Protocol}
\label{Alg:protocol}
\begin{algorithmic}[1]
\State Input: File $\mathcal{W}$, $Q$ functions; number of servers $K = k \times q$. $K$ divides $Q$.
\State Use a $(k,k-1)$ SPC code to generate a design $(\mathcal{X},\mathcal{A})$
\State Split $\mathcal{W}$ into $q^{k-1}$ disjoint subfiles, $w_{1},\dots,w_{q^{k-1}}$.
\State Assign subfiles to servers such that server $B_{i,j}$ is assigned subfile $w_\ell$ if $\ell \in B_{i,j}$.
\State Choose an equal-size partition of $[Q]$ to obtain the sets $\phi^{B_{i,j}}$ for $i = 1, \dots, k$ and $j = 0, \dots, q-1$.
\State Execute the Map phase on each of the servers.
\State Choose all possible sets $\{B_{1,j_1}, B_{2, j_2}, \dots, B_{k,j_k}\}$ where $j_\ell \in \{0, \dots, q-1\}$, such that $\cap_{\ell=1}^k B_{\ell, j_\ell} = \emptyset$ and store them in a collection $\calG$.
\For{$\gamma \in [Q/K]$}
\For{each group $G=\{B_{1,j_1}, B_{2, j_2}, \dots, B_{k,j_k}\}\in \calG$}
\State Determine $\Delta^G_\ell = \nu_{\phi^{B_{i,j}}[\gamma], \cap_{k\neq \ell} B_{k,j_k}}$ for $\ell = 1, \dots, k$.
\State Split packet $p(\Delta^G_\ell)$ into $k-1$ parts at each server where it is available.
\State Execute BIPARTITE-STEP$(\{\Delta^G_\ell\}_{\ell = 1}^k, G)$
\State For each $\ell \in [k]$, server $B_{\ell, j_\ell}$ transmits $$\bigoplus_{m \neq \ell} p(\Delta^{G}_m)[\text{label}(\Delta^{G}_m,B_{\ell, j_\ell})]$$
\EndFor
\EndFor
\State Execute Reduce phase on each of the servers.
\end{algorithmic}
\end{algorithm}

Consider a file $\mathcal{W}$ on which $Q$ functions need to be computed and suppose that we have access to $K$ servers; we assume that $Q$ is a multiple of $K$. In Algorithm \ref{Alg:protocol}, we present the steps that specify the entire protocol.
The protocol can be understood as follows. We choose an integer $q$ such that $q$ divides $K$, i.e., $K = k \times q$. Next, we form a $(k,k-1)$ SPC code and the corresponding resolvable design using the aforementioned procedure. The point set $\calX = [q^{k-1}]$ and the block set $\calA$ will be such that $|\calA| = kq$. The blocks of $\calA$ will be indexed as $B_{i,j}, i = 1, \dots, k$ and $j = 0, 1, \dots, q-1$.

We associate the point set $\calX$ with the subfiles, i.e., $N = |\calX| = q^{k-1}$ and the block set $\calA$ with the servers. The map task assignment follows the natural incidence between the points and the blocks, i.e., server $B_{i,j}$ is responsible for executing the map tasks on the set of subfiles $\text{Map}[B_{i,j}] = \{w_\ell ~|~ \ell \in B_{i,j}\}$. Thus, at the end of the map phase, server $B_{i,j}$ has computed the $Q$ intermediate value on the subfiles in $\text{Map}[B_{i,j}]$.

Recall that we assume that $K$ divides $Q$. Thus, each server is responsible for reducing $Q/K$ functions. We let $\phi^{B_{i,j}} \subset [Q]$ represent the set of functions assigned for reduction to server $B_{i,j}$. The sets $\phi^{B_{i,j}}$ form a partition of $[Q]$. For ease of notation, we let $\phi^{B_{i,j}}[\ell]$ represent the $\ell$-th function in the set $\phi^{B_{i,j}}$; $\ell$ ranges from $1$ to $Q/K$.

Following the map phase, in the shuffle phase, each server $B_{i,j}$ needs intermediate values from other servers so that it has enough information to reduce the functions in $\phi^{B_{i,j}}$. In this step we transmit coded packets that are simultaneously useful to multiple servers. Towards this end we form a collection of user groups by choosing one block from each parallel class according to the rule in Step 7 of the protocol, i.e., we choose servers $B_{1,j_1}, B_{2, j_2}, \dots, B_{k,j_k}$ such that $\cap_{\ell=1}^k B_{\ell, j_\ell} = \emptyset$. For a given user group $G$ (of size $k$) we can show that each server in $G$ can transmit a useful message to $k-1$ other servers. Furthermore, we can show that considering all possible user groups allows the shuffle phase to achieve its objective, i.e., at the end of the shuffle phase, all servers have enough information to execute the reduce phase. An instance of the shuffle phase equations was discussed in Example \ref{eg:N4_example} (see Table \ref{table:groups_eg1}).

\begin{figure}[t]
\centering
\includegraphics[scale=1]{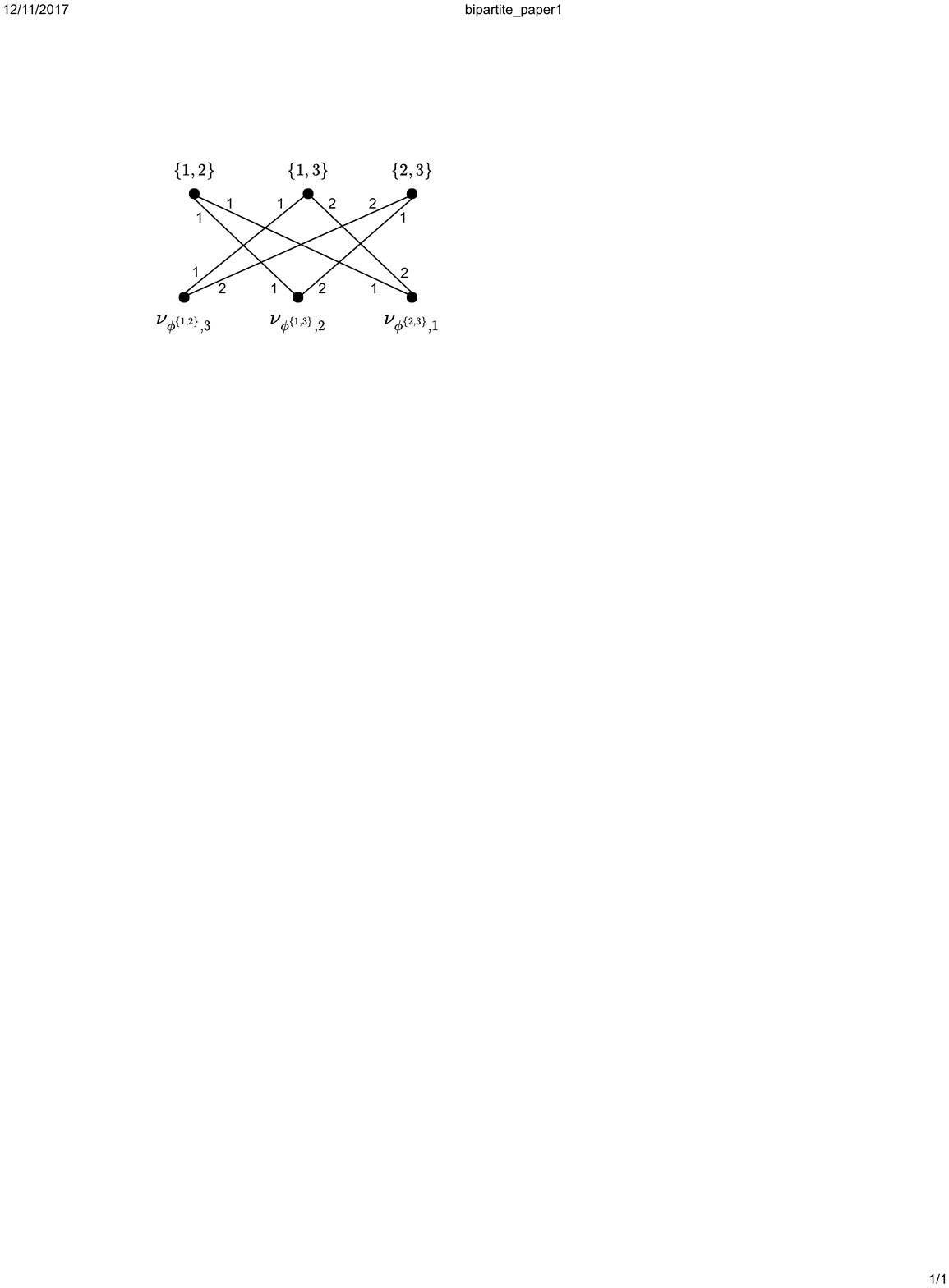}
\caption{Bipartite graph corresponding to the user group $G_1$ in Example \ref{eg:N4_example}.}
\vspace{-0.2in}
\label{fig:bipartite}
\end{figure}

\subsection{Proof of correctness and communication load analysis}

We now prove that the proposed protocol allows each server $B_{i,j}$ to recover enough information at the end of the shuffle phase. As the protocol is symmetric with respect to blocks, we equivalently show that server $B_{1,j_1}$ is satisfied. Note that $|B_{1,j_1}| = q^{k-2}$. For the purposes of our arguments below, we assume that $Q = K$. The case when $Q$ is an integer multiple of $K$ is quite similar. In this case, with some abuse of notation, since $\phi^{B_{1,j_1}}$ is a singleton set, we use $\phi^{B_{1,j_1}}$ to actually represent the function index itself. It is therefore clear that $B_{1,j_1}$ needs the intermediate values $\nu_{\phi^{B_{1,j_1}},n}$ for $n \in [q^{k-1}] \setminus B_{1,j_1}$.

Now consider the construction of the user groups. Let $G$ be a user group where $B_{1,j_1}$ is chosen from $\calP_1$, i.e., $G = \{B_{1,j_1},B_{2,j_2}, \dots, B_{k,j_k}\}$. In Lemma \ref{claim:intersect} in the Appendix, we show that the intersection of {\it any} $k-1$ blocks from $k-1$ distinct parallel classes is always of size $1$. Furthermore, note that the intersection of all the blocks in $G$ is empty. We will show that in the transmissions corresponding to user group $G$, the server $B_{\ell,j_\ell}$ can obtain intermediate value $\Delta^{G}_\ell = \nu_{\phi^{B_{\ell,j_\ell}}, \cap_{k \neq \ell} B_{k,j_k}}$ which it wants. To do this we consider the construction of the bipartite graph shown in Algorithm \ref{Alg:BIPARTITE-STEP}. An example of this bipartite graph corresponding to user group $G_1$ in Example \ref{eg:N4_example} appears in Fig. \ref{fig:bipartite}. The bottom nodes in the graph represent the useful intermediate values received by each node in $G$. The top nodes represent the servers in $G$.

An edge in the graph represents the fact that a server is capable of generating the intermediate value on the bottom. We arbitrarily label the incoming edges into a bottom node from $1$ to $k-1$. Following this labeling, we can determine the exact transmission corresponding to user group $G$. In particular, according to the protocol, server $B_{\ell, j_\ell}$ transmits
$$ \bigoplus_{m \neq \ell} p(\Delta^{G}_m)[\text{label}(\Delta^{G}_m,B_{\ell, j_\ell})].$$
Thus, the packet corresponding to each useful intermediate value is subdivided into $k-1$ equal parts. Then, it is evident that the equations transmitted above, allow each $B_{\ell,j_\ell}$ to recover $\Delta^{U}_\ell$ for $\ell = 1, \dots, k$. The total number of bits transmitted in processing user group $G$ is therefore $B \times \frac{k}{k-1}$.


\begin{algorithm}[!t]
\caption{BIPARTITE-STEP Algorithm}\label{Alg:BIPARTITE-STEP}
\begin{algorithmic}[1]
\State Input: Block $G = \{B_{1,j_1}, \dots, B_{k,j_k}\}$, and $\{\Delta^G_\ell\}_{\ell =1}^k$.
\State Construct a bipartite graph $\calH$ with $k$ nodes on the top corresponding to $G$ and $k$ nodes on the bottom corresponding to
$\{\Delta^G_\ell\}_{\ell =1}^k$.
\State For $\ell \in [k]$, connect $\Delta^G_\ell$ to the nodes in $G \setminus \{B_{\ell, j_\ell}\}$.
\For{$\ell \in [k]$}
\State In $\calH$, label the edges incident on $\Delta^G_\ell$, arbitrarily with distinct labels from $1, \dots, k-1$.
\EndFor
\end{algorithmic}
\end{algorithm}

We conclude the proof by observing that a given block, e.g., $B_{\ell,j_\ell}$ participates in $q^{k-2} (q-1) = q^{k-1} - q^{k-2}$ user groups each of which allow it to obtain distinct intermediate values. This can be seen as follows. Suppose for instance, that
$$ \cap_{k \neq \ell} B_{k,j_k} = \cap_{k \neq \ell} B_{k,j_k'}$$
where $j_m \neq j_m'$ for at least one value of $m \in [k] \setminus \{\ell\}$. In this case, we note that the equality above implies that
$\cap_{k \neq \ell} B_{k,j_k} \bigcap \cap_{k \neq \ell} B_{k,j_k'} \neq \emptyset$. This is a contradiction, because $B_{m,j_m} \cap B_{m,j_m'} = \emptyset$ as they are two blocks belonging to the same parallel class.

Therefore, since $B_{\ell,j_\ell}$  is missing exactly $q^{k-1} - q^{k-2}$ intermediate values, it follows that at the end of the shuffle phase it is satisfied. By symmetry, therefore all users are satisfied.

Next, we present the analysis of the communication load of our algorithm. In the uncoded case, each server needs $QN/K$ intermediate values $\nu_{j,n}$'s to execute its reduce phase. Note that each server already has $rN/K \times Q/K$ of them. Thus, in the shuffle phase the communication load is given by
\begin{align*}
L_{uncoded} &= \frac{K(QN/K - rQN/K^2)B}{QNB}\\
&= 1 - \frac{r}{K}.
\end{align*}
On the other hand, for our scheme, the number of bits transmitted in shuffle phase is given by
$$ q^{k-1}(q-1) \cdot B\frac{k}{k-1} \cdot \frac{Q}{K}.$$
Thus, the communication load is given by
\begin{align*}
L_{prop} &= \frac{q^{k-1}(q-1) \cdot B\frac{k}{k-1} \cdot \frac{Q}{K}}{QNB}\\
&= \frac{1}{k-1} \bigg{(} 1 - \frac{k}{K} \bigg{)},
\end{align*}
where the second equality above is obtained by using the fact that $N = q^{k-1}$ and $K = kq$.

Next, note that for our proposed scheme the computation load is $k$, i.e., $r=k$. Thus, we reduce the overall communication load by a factor of $\frac{1}{r-1}$ with respect to an uncoded system.
In contrast, the approach of \cite{LiMA16}, reduces the communication load by a factor of $\frac{1}{r}$. However, this comes at the expense of a large $N$ as discussed previously.

\section{Details of Implementation}
\label{sec:implementation}
We have implemented TeraSort on Amazon EC2 clusters  using our proposed approach. The implementation was performed in C++ using the Open MPI library for communication among the processes of the master and the servers. Our code builds on \cite{USCTS}
and comparisons with the uncoded case and the approach of \cite{LiSMA17} have been made. In this section, we describe the configuration of these clusters as well as implementation details.

TeraSort is a popular benchmark that measures the time to sort a big amount of randomly generated data on a given cluster of computers. For example, Apache Hadoop \cite{HADOOP}, which is a popular distributed computing platform provides a standard software library that contains its implementation. TeraSort has a long history which goes back to 2008 when Yahoo! managed to set a record sorting a 1TB of data in 209 seconds using a Hadoop cluster of 910 node \cite{TERASORT2008}. The data set in TeraSort is such that each line of the file is a key-value (KV) pair typically consisting of an integer key and a arbitrary string value. The sorting is done based on the key. It is not too hard to see that this KV formulation can be put in on-to-one correspondence with the formulation in terms of map and reduce functions ({\it cf.} Section \ref{sec:formulation}). As the KV terminology is quite well-established for TeraSort, we discuss it in these terms in the subsequent discussion.




\subsection{Amazon EC2 cluster configuration}
We used Amazon EC2 instances among which one served as a master and the rest of them as slaves (servers). The specifications of these machines are given in Table \ref{table:ec2cluster}. After placing the subfiles to the carefully chosen servers we also impose a limit of 100Mbps for both incoming and outgoing traffic of all machines\footnote{In order to manipulate the traffic control settings, we use the Linux $\texttt{tc}$ command}. This serves the purpose of alleviating bursty TCP transmissions.
\subsection{Data set description}
For the TeraSort experiments we generated 12GB of total data. Each row of the file holds a 10-byte key (unsigned integer) and its corresponding 90-byte value (arbitrary string). The TeraGen utility of Hadoop distribution was used to randomly generate this data. The KV pairs are lexicographically sorted with respect to the ASCII code of their keys where the leftmost and the rightmost byte are the most and the least significant byte, respectively.


\begin{table}[!t]
\centering
\caption{Specifications of the Amazon EC2 machines of the cluster}
\begin{tabular}{|>{\centering}m{1cm}||>{\centering}m{1cm}|>{\centering}m{1cm}|>{\centering}m{1cm}|>{\centering\arraybackslash}m{1.5cm}|}
\hline
Machine role&Instance type&Virtual CPUs&Memory&Storage\\
\hline
Master&r3.large&2&15.25GB&32GB SSD\\
\hline
Server&m3.large&2&7.5GB&32GB SSD\\
\hline
\end{tabular}
\label{table:ec2cluster}
\vspace{-0.2in}
\end{table}

\subsection{Platform and code implementation description}

The source code is in C++ and we used the Open MPI library, version 1.10.2, for communication among the processes of the master and the servers. Our source code repository is available at \cite{KKCT}.

The master node is responsible for placing the subfiles in the local drives of the servers and partitioning the domain of the keys. This partitioning step is equivalent to deciding the reducer responsibilities for each slave. It also initiates the MPI program to all servers. From this point onwards, the master will only take time measurements from the servers at the end of each MapReduce phase. 

The partitioning process can be explained as follows.
The master initially takes $s \geq K-1$ samples from the data set and sorts them. The $i$-th smallest sample will be $a_i$. The sorted samples are
$$\mathbf{a} = \left[\begin{IEEEeqnarraybox*}[][c]{,c/c/c/c,}a_1 & a_2 & \dots & a_s\end{IEEEeqnarraybox*}\right].$$
Assume that there are $Q=K$ functions to be reduced. Subsequently, we will pick $K-1$ equally spaced samples from $\mathbf{a}$, i.e., we pick
$$B=\Big\{a_i:i=\Big\lceil\frac{js}{K}\Big\rceil, j\in[K-1]\Big\},$$ and collect them into the vector
$$\mathbf{b}=\left[\begin{IEEEeqnarraybox*}[][c]{,c/c/c/c,}b_1 & b_2 & \dots & b_{K-1}\end{IEEEeqnarraybox*}\right].$$
The domain of the keys is partitioned into $K$ non-overlapping domains based on the keys in $\mathbf{b}$ such that $i$-th partition can be described by
\[
C_i= \left\{
\begin{array}{ll}
      \{d\in D:d<b_1\} & i=1, \\
      \{d\in D:b_{i-1}\leq d<b_i\} & i=2,\dots,K-1,\\
      \{d\in D:d\geq b_{K-1}\} & i=K. \\
\end{array}
\right.
\]
Thus, server $i$ will reduce the keys of partition $C_i$. The above construction guarantees that, at the end of the algorithm, the output of $i$-th server has smaller keys than the output of the $(i+1)$-th server.

To speed up the partitioning, the partitioner builds a two level trie that quickly indexes into the list of sample keys based on the first two bytes of the key. The root of trie has 256 (equal to the total number of ASCII characters) children and each of them has another 256 children. The root is associated with the empty key and all descendants of a node have a common prefix of the key associated with that node. The trie structure is typically chosen due to its key lookup efficiency as well as the guarantees it provides against the occurrence of hash collisions.

The only I/O operations at the slaves are those of reading the KV pairs from the subfiles. The intermediate data resides only in the memory during the Encoding, Shuffling and Decoding phases so there is no I/O involved during these stages.


The overall sequence of steps in processing a given job are: CodeGen $\rightarrow$ Map $\rightarrow$ Pack/Encode $\rightarrow$ Shuffle $\rightarrow$ Unpack/Decode $\rightarrow$ Reduce. We explain these steps below.
\begin{itemize}[leftmargin=4mm]
\item \textit{Code generation}: All nodes (including the master) start by generating the resolvable design based on our choice of the parameters $q$ and $k$. Next, the data set is split into $N$ subfiles by the master and the appropriate subfiles are transmitted to each slave based on Step 4 of the protocol.
    %
    The master also broadcasts the partition boundaries i.e. the keys that describe the partitioning of the domain at this point. Following the CodeGen phase, each slave will hash all keys of its subfiles to their partitions based on the partition boundaries.
\item \textit{Map.} 
    For each subfile $w_a$ that server $B_{i,l}$ has in its block, it will compute $\{\nu_{1,a},\dots,\nu_{Q,a}\}$ during the Map phase.
\item \textit{Pack/Encode.} For the uncoded implementation, we use the Pack operation. The Pack stage stores all intermediate values that will be sent to the same reducer in a continuous memory array so that a single TCP connection for each sender/receiver pair suffices (which may transmit multiple KV pairs) when $\texttt{MPI\_Send}$ is
called\footnote{In the Shuffling phase of the uncoded case, each server unicasts data to a single receiver at any particular time, which is exactly the purpose of $\texttt{MPI\_Send}$ call.}.

    In the coded implementation encoded packets are created from the mapped data as described in Algorithm \ref{Alg:protocol}. 
\item \textit{Shuffle.} For each shuffling group $G$ a server belongs to, it will broadcast an appropriate encoded packet to the rest of the group.
\item \textit{Unpack/Decode.} In the uncoded implementation we use the Unpack operation which simply deserializes the received data to a list of KV pairs.
In the coded implementation the intermediate values are decoded locally on each server from the received data. 
\item \textit{Reduce.} The reduce function is applied on the unpacked/decoded data.
\end{itemize}

%

\begin{table}[t]
\centering
\caption{Measurements for sorting 12GB data on 16 server nodes without coding}
\begin{tabular}{ |P{0.5cm}|P{0.5cm}|P{0.8cm}|P{0.8cm}|P{0.8cm}|P{1.25cm}|P{0.8cm}| }
 \hline
 Map&Pack&Shuffle&Unpack&Reduce&Total Time&Rate\\
 (sec.)&(sec.)&(sec.)&(sec.)&(sec.)&(sec.)&(Mbps.)\\
 \hline
 3.36&2.55&999.84&2.19&12.45&1020.39&100.80\\
 \hline
\end{tabular}
\label{table:uncoded}
\vspace{-0.2in}
\end{table}

We should emphasize that the groups formed during the CodeGen phase are stored in a structure according in the same order in the memory of all server machines. This facilitates a consistent shuffling phase where at any particular time it is clear which group is communicating, which nodes are the transmitters and which are the receivers. For the coded implementation there are two approaches we have tested to facilitate shuffling.
\begin{itemize}[leftmargin=4mm]
\item Using $\texttt{MPI\_Bcast}$ call: For this approach each server iterates through all shuffling groups that it belongs to and broadcasts a coded packet to the rest of the servers in the group. Thus, there is one sender and multiple receivers.
\item Using $\texttt{MPI\_Allgatherv}$ call: In this case, each server iterates through all groups of the scheme and checks for participation. If it belongs to the current group, it waits for the rest of the servers of the group to call this function. The exchange of all coded packets within the group ensues. This is a many-to-many communication pattern. Specifically, all nodes send and receive encoded data.
\end{itemize}

\section{Results and Discussion}
\label{sec:results}
We now present and discuss our experimental results. The results are based on the experiments we ran on 16 Amazon EC2 machines (servers) and the data set size is 12GB.
Table \ref{table:uncoded} corresponds to a uncoded TeraSort with $r=1$. It shows that the shuffle phase which takes $999.84$ seconds, dominates the overall execution time by far.

\begin{table*}[!t]
\centering
\caption{MapReduce time for sorting 12GB data on 16 server nodes including the memory allocation cost}
\label{table:tests}
\begin{tabular}{ |P{2cm}||P{1.1cm}|P{0.7cm}|P{1cm}|P{1.1cm}|P{1.1cm}|P{0.9cm}|P{1cm}|P{1cm}|P{0.9cm}|P{0.9cm}|P{0.9cm}| }
\hline
&\multirow{3}{*}{CodeGen}&\multirow{3}{*}{Map}&\multirow{3}{*}{\makecell{Pack/\\ Encode}}&\multirow{3}{*}{Shuffle}&\multirow{3}{*}{\makecell{Unpack/\\ Decode}}&\multirow{3}{*}{Reduce}&\multicolumn{2}{c|}{Total Time}&\multicolumn{2}{c|}{Speedup}&\multirow{3}{*}{Rate}\\
&\multirow{3}{*}{(sec.)}&\multirow{3}{*}{(sec.)}&\multirow{4}{*}{(sec.)}&\multirow{3}{*}{(sec.)}&\multirow{4}{*}{(sec.)}&\multirow{3}{*}{(sec.)}&\multicolumn{2}{c|}{(sec.)}&\multicolumn{2}{c|}{}&\multirow{3}{*}{(Mbps.)}\\
\cline{8-9}\cline{10-11}
&&&&&&&\multirow{2}{*}{w/MA}&w/out MA&\multirow{2}{*}{w/MA}&w/out MA&\\
\hline
Uncoded: &-&5.71&11.75&1105.64&4.46&12.88&1140.44&1126.68&&&100.83\\
Prior: $r=3$&5.82&17.94&229.80&455.05&6.23&14.54&729.38&496.76&$1.56\times$&$2.27\times$&64.79\\
Prior: $r=5$&26.78&29.99&1000.15&297.28&8.16&16.47&1378.83&490.28&$0.83\times$&$2.30\times$&61.04\\
Prior: $r=8$&38.41&51.03&1128.16&-&-&-&-&-&-&-&-\\
Proposed: $r=4$&0.64&25.91&9.93&307.15&6.91&17.29&367.83&352.91&$3.10\times$&$3.19\times$&88.47\\
Proposed: $r=8$&0.61&62.46&26.22&127.43&8.38&17.85&242.95&204.58&$4.69\times$&$5.51\times$&62.68\\
\hline
\end{tabular}
\end{table*}

\begin{table*}[!t]
\centering
\caption{Memory allocation cost for sorting 12GB data on 16 server nodes}
\label{table:ma}
\begin{tabular}{ |P{2cm}||P{0.7cm}|P{1.4cm}|P{1.8cm}|P{1.3cm}| }
\hline
&Map&Pack/Encode&Unpack/Decode&Total Time\\
&(sec.)&(sec.)&(sec.)&(sec.)\\
\hline
Uncoded: &2.32&9.17&2.27&13.76\\
Prior: $r=3$&6.87&223.16&2.59&232.62\\
Prior: $r=5$&11.29&874.14&3.12&888.55\\
Prior: $r=8$&18.03&968.48&-&-\\
Proposed: $r=4$&9.91&1.85&3.16&14.92\\
Proposed: $r=8$&22.01&13.17&3.19&38.37\\
\hline
\end{tabular}
\vspace{-0.2in}
\end{table*}

\subsection{Experimental Results}
Tables \ref{table:tests} and \ref{table:ma} contain the results of TeraSort using our approach and comparisons with the approach of \cite{LiSMA17}. There are approximately $130\times10^6$ KV pairs to be sorted. Table \ref{table:tests} presents the time each phase needed to completed including the memory allocation time. We have included a column for the total time that omits the memory allocation cost and another column that shows the speedup in that case. Table \ref{table:ma} presents only the memory allocation time of certain MapReduce phases for the same experiments. The need to take into account the memory allocation cost comes from the fact that for data sets at this scale, dynamic memory allocation on the heap (using the C++ $\texttt{new}$ operator) has a non-negligible impact on the total time. We note here that the results in \cite{LiSMA17} are generated using the code in \cite{USCTS} which explicitly ignores the memory allocation time ({\it cf.} communication with the first author of \cite{LiSMA17}). In our implementation (available at \cite{KKCT}), we measure the memory allocation time as well. 
We emphasize however, that the results in Table \ref{table:tests} indicate that our approach is consistently superior whether or not one takes into account the memory allocation time.

To understand the effect of choosing different values of $N$, we applied our algorithm with different values of $(k,q)=(r,q)$ pairs. The numbers reported at each line of the tables refer to a single experiment since we observed that after executing multiple times the behavior of the algorithm is consistent and the change is negligible compared to the total execution time. The network rate also remains the same across different runs.

We observe from Table \ref{table:tests} that if we account for memory allocation cost, our scheme achieves up to $4.69\times$ speedup compared to the uncoded TeraSort whereas if we ignore this code our schemes demonstrates an improvement of up to $5.51\times$. Moreover, the gain over the prior coded TeraSort scheme, if we compare the best time reported by each scheme, can go up to $4.69/1.56\approx3.01\times$ (when including memory allocation time) or $5.51/2.30\approx2.4\times$ (when excluding memory allocation time). We note here that the shuffle phase results corresponding to $r=8$ for prior work could not be obtained as their program crashed.

The following inferences can be drawn from Tables \ref{table:tests} and \ref{table:ma}.
\begin{itemize}[leftmargin=2mm]
\item Our CodeGen phase is quite efficient since the number of groups we need to generate and subsequently the number of intra-communicators we need to split the servers' communicator into, is quite smaller than that of the prior scheme. For example, let us look at the CodeGen time for $r=3$ of the prior scheme which is $t_1=5.82$. The corresponding number of groups is $g_1={{K}\choose{r+1}}={{16}\choose{4}}=1820$. For our scheme, that time is $t_2=0.64$ and the number of multicast groups is $g_2=q^{r-1}(q-1)=4^3\times3=192$. Now if we try to interpolate our code generation cost from $t_1$, based on our analysis, we would get:
$$t_2^{'}=\frac{g_2}{g_1}\times t_1=\frac{192}{1820}\times5.82\approx 0.61\approx t_2$$
\item The Map time mainly depends on the computation load $r$. We should recall at this point that $r$ is the number of times the whole data set is replicated and processed across the cluster. In that sense, we expect the Map cost of both coded schemes to be approximately $r$ times higher than that of the uncoded implementation of the algorithm. Indeed, if we look at our scheme for $r=4$ we see that $\frac{25.91}{5.71}\approx4.54$ is a good approximation to $r$. The overhead can be attributed partly to the fact that in the coded implementation each server has to open multiple input streams to read the subfiles from the disk while in the uncoded case each server opens only one stream. 
\item The encoding time of the coded schemes is not directly comparable to the packing of the uncoded approach. Nevertheless, we have a significant benefit over the prior scheme. For $r=8$, we obtain a speedup of $\frac{1128.16}{26.2}\approx43.06$.
This is explained by the fact that in the previous scheme each slave participates into much more groups and thus it needs to store more encoded data into its memory. Even the encoding time of the prior scheme for $r=3$ is quite expensive and takes $229.80$ seconds. 
\item The shuffle phase is where we can see the advantages of our implementations. For example, when $r=8$, our predicted load will be $1/14$, while the load of the uncoded $r=1$ scheme will be $15/16$. Thus, with the same transmission rate we expect our shuffle phase to be $13.125$ times faster. However, our obtained transmission rate is approximately $62.68$ Mbps. Thus, the overall gain is expected to be around $8.16$ times. In the actual measurements our gain is $\frac{1105.64}{127.43}\approx8.68$ which is quite close to the prediction.

    On the other hand, let us consider the prior scheme when $r=5$. In this case the load analysis predicts a gain of $6.82$ assuming that the transmission rates are the same and a gain of $4.127$ when accounting for the different rates. However, the actual gain is $\frac{1105.64}{297.28}\approx 3.72$ which is a little lower.

    Some of these discrepancies can be explained by the fact that the cost of multicasting a message from a node to $n$ receivers is not necessarily $n$ times cheaper than unicasting that message separately to each of the $n$ receivers. However, another factor that hurts the prior scheme is the large number of user groups that need to be considered in the shuffle phase, e.g, when $r=5$, the prior scheme needs to consider $\binom{K}{r+1} = 8008$ user groups, whereas our proposed scheme (for $r=8$), only needs to consider $q^{k-1}(q-1) = 128$ user groups. Thus, the overhead of setting the connections is much lesser in our protocol. Moreover, note that the lower layers of network protocols introduce additional headers in each packet. As the payloads in the prior scheme are smaller, the header overhead is also likely to affect them more.

\end{itemize}


\subsection{Discussion}
The major issue of the prior scheme \cite{LiSMA17} is the large value of $N$ that it needs. This translates into a large number ($\binom{K}{r+1}$) of user groups in the shuffling phase.
This number can be prohibitive for today's High Performance Computing (HPC) communication protocols like the Message Passing Interface (MPI). This is because all MPI communication is associated with a \textit{communicator} that describes the communication context and an associated group of processes, as seen in \cite{WEAUM}. But, the cost of splitting the initial communicator is non-negligible  \cite{Sack2010}. In the case of coded TeraSort of \cite{LiSMA17} the overall communicator needs to be split into ${K}\choose{r+1}$ \textit{intra-communicators} each of which facilitates the communication within a group. For this splitting to happen, all servers need to call the function MPI\_Comm\_Split. 

%
%
%

We demonstrate the impact of this issue by explicitly measuring the time needed to split the initial communicator of $K$ servers into ${K}\choose{x}$ intra-communicators, each of size $x$ for different values of $K$ and $x$. Let us refer to Fig. \ref{fig:comm_split}.
We see that the cost of MPI\_Comm\_Split incurs an exponential cost that can easily dominate the overall MapReduce execution. The instance type used for these servers is $\texttt{m3.large}$. This clearly indicates that even though the communication load may reduce with increasing $r$ in the scheme of \cite{LiSMA17}, the overall execution time may be adversely affected.

\begin{figure}[t]
\centering
\includegraphics[scale=0.6]{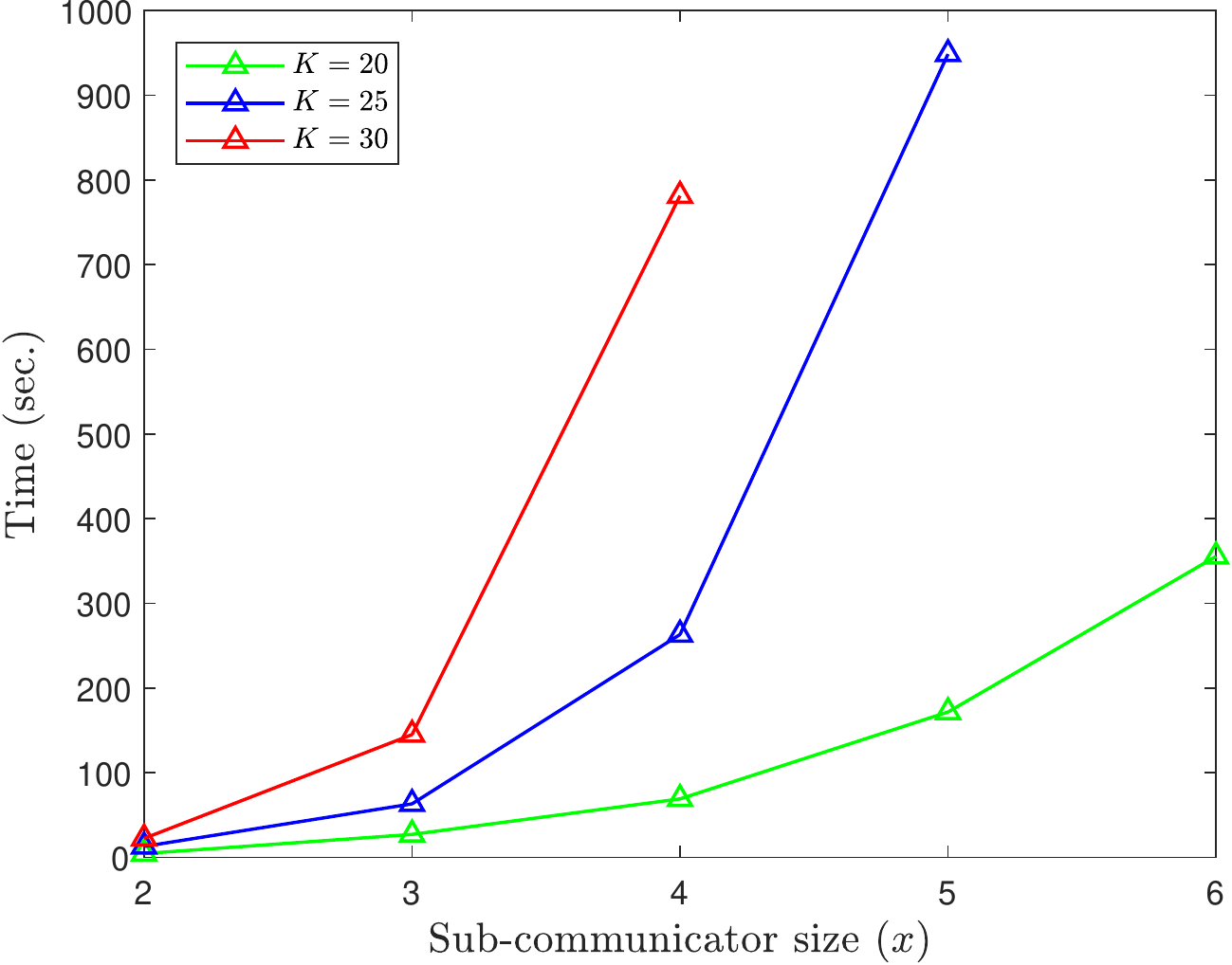}
\caption{Conventional \texttt{MPI\_Comm\_Split} execution time for splitting $K$ nodes into ${K}\choose{x}$ sub-communicators each of size $x$.}
\label{fig:comm_split}
\vspace{-0.2in}
\end{figure}

Another point to consider is that the MPI library might support a limited number of communicators. Some indicative examples are those of Open MPI \cite{OMPI} which supports up to $2^{30}-1$ communicators, MPI over InfiniBand, Omni-Path, Ethernet/iWARP, and RoCE (MVAPICH) \cite{MVAPICH} which allows for up to 2000 communicators and High-Performance Portable MPI (MPICH) \cite{MPICH} that limits this number to 16000. Thus, if we have ($K=50$, $r=10$) the number of required groups will be ${50}\choose{11}$ which would exceed these limits.

Our experiments indicate that the time consumed in memory allocation can be non-negligible. We emphasize though that our gains over prior methods hold even if we do not take the memory allocation time into account.
Nevertheless, this is an important practical issue. It may be possible to improve this via more optimized C++ code.

Another interesting aspect of our experiments is that the observed transmission rate appears to change based on the value of $r$. In our experiments we capped the transmission rate at $100$ Mbps. However, the observed rate can be as low as $61.04$ Mbps. As our experiments run on Amazon EC2, we do not have a clear explanation on the underlying reasons. Nevertheless, we point out the rates for our proposed $r=8$ and the prior scheme $r=5$ are quite close and even here, we have a large speedup.

\section{Conclusions}
\label{sec:conclusions}
In this work we presented a distributed computing protocol by leveraging the properties of resolvable designs. These designs can in turn be generated from single parity-check codes. Our protocol is in essence, a mechanism for exploring the computation vs. communication tradeoff that is inherent within the MapReduce framework. While prior work has identified and proposed techniques for exploring this tradeoff, it suffers from performance issues relating to the large number of parts that a given job needs to be partitioned into. Our proposed approach works with a significantly smaller number of parts. Experimental results indicate that our protocol results in significantly lower execution times than the prior state of the art.

\appendix

\subsection*{Proof of Lemma \ref{lemma:cons_yields_resolv_design}}
For a given $i$, we need to show that $|B_{i,l}| = q^{k-2}$ for all $0 \leq l \leq q-1$ and that $\cup_{l=0}^{q-1} B_{i,l} = [q^{k-1}]$.
Let $\Theta = [\Theta_1 ~\Theta_2~ \dots~\Theta_k]= \bfu \bfG_{SPC}$. Then,
\begin{align*}
\Theta_i = \begin{cases}
\bfu_i & i = 1, \dots, k-1,\\
\sum_{j=1}^{k-1} \bfu_j & i=k.
\end{cases}
\end{align*}
Thus, for $1 \leq i \leq k-1$, we have that $|B_{i,l}| = |\{\bfu : \bfu_i = l\}|$. This equals $q^{k-2}$ as it is the subset of all $\bfu$ with the $i$-th coordinate equal to $l$. Furthermore, it is evident that $\calP_i = \{B_{i,l}: 0 \leq l \leq q-1\}$ forms a parallel class as the $i$-th coordinate has to belong to $\{0, \dots, q-1\}$.

Next, consider the result when $i = k$. We have
\begin{align*}
\sum_{j=1}^{k-2} \bfu_j = l - \bfu_{k-1}
\end{align*}
where $l$ is fixed.
For arbitrary $\bfu_j, 1 \leq j \leq k-2$, this equation has a unique solution for $\bfu_{k-1}$.
This implies that for any $l$, $|B_{k,l}| = q^{k-2}$ and that $\calP_{k}$ forms a parallel class.

\begin{lemma}
\label{claim:intersect}
Consider a resolvable design $(X,\mathcal{A})$ constructed by the proposed construction with parameters $k$ and $q$ and parallel classes $\mathcal{P}_1,\dots,\mathcal{P}_k$. If we pick $k-1$ blocks $B_{i_1,l_1},\dots,B_{i_{k-1},l_{k-1}}$ (where $i_j\in [k]$, $l_j\in\{0,\dots,q-1\}$) from distinct parallel classes $\mathcal{P}_{i_1},\dots,\mathcal{P}_{i_{k-1}}$, then $|\cap_{j=1}^{k-1}B_{i_j,l_j}|=1$.
\end{lemma}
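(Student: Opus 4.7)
The plan is to translate the intersection condition into a statement about codewords of the single parity-check code. Recall that by construction, each column index $j \in [q^{k-1}]$ corresponds to a codeword $\mathbf{c}_j = \bfu_j \bfG_{SPC}$ of the $(k,k-1)$ SPC code over $\mathbb{Z}_q$, and $B_{i,l} = \{j : \bfT_{i,j} = l\} = \{j : (\mathbf{c}_j)_i = l\}$. Consequently,
\[
\bigcap_{m=1}^{k-1} B_{i_m, l_m} = \{ j \in [q^{k-1}] : (\mathbf{c}_j)_{i_m} = l_m \text{ for all } m = 1, \dots, k-1 \}.
\]
So the claim reduces to showing that prescribing any $k-1$ of the $k$ coordinates of an SPC codeword determines the codeword uniquely.

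To prove this, I would invoke the parity-check equation $c_1 + c_2 + \cdots + c_k \equiv 0 \pmod q$ that defines the code (equivalently, the last row of $\bfG_{SPC}^\top$ forces the $k$-th symbol to be the sum of the first $k-1$; more generally every coordinate appears with coefficient $1$ in the single check). Let $i_k$ be the unique element of $[k] \setminus \{i_1, \dots, i_{k-1}\}$. Then for any values $l_1, \dots, l_{k-1} \in \mathbb{Z}_q$, the parity equation forces $c_{i_k} = -\sum_{m=1}^{k-1} l_m \pmod q$, and the resulting vector is a genuine codeword. Since the construction enumerates all $q^{k-1}$ codewords exactly once (one per message $\bfu \in \mathbb{Z}_q^{k-1}$), exactly one column index $j$ produces this codeword, so the intersection has size exactly $1$.

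The only subtlety worth flagging is that the argument must hold regardless of which $k-1$ parallel classes were chosen, i.e., any $k-1$ of the $k$ coordinates of an SPC codeword must form an information set. This is not really an obstacle but is the substantive content of the lemma: it follows immediately from the fact that the parity-check matrix $\bfH = [1 ~ 1 ~ \cdots ~ 1]$ has every coordinate as a non-redundant check, so deleting any single column still leaves an injective map from messages to the remaining $k-1$ coordinates. With that observation in hand the counting argument above closes the proof.
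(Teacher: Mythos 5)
Your proposal is correct and follows essentially the same route as the paper: both reduce the claim to the fact that any $k-1$ coordinates of a $(k,k-1)$ SPC codeword determine the remaining coordinate uniquely (the paper argues on the message/generator side via a small linear system, you via the single parity check), combined with the observation that each of the $q^{k-1}$ codewords appears exactly once as a column of $\mathbf{T}$. One minor caveat: for the code as defined by $\mathbf{G}_{SPC}$ the check is $c_1+\cdots+c_{k-1}-c_k\equiv 0 \pmod{q}$ rather than the all-ones check when $q>2$, but since $-1$ is a unit in $\mathbb{Z}_q$ your unique-completion argument is unaffected (only the explicit formula for $c_{i_k}$ acquires a sign).
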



\begin{proof}
From the construction in Section \ref{sec:scheme}, we know that a block $B_{i,l} \in \calP_i$ is specified by
$$
B_{i,l} = \{j : \bfT_{i,j} = l\}
$$
Now, consider $B_{i_1, l_1}, \dots, B_{i_{k-1}, l_{k-1}}$ (where $i_j \in [k], l_j \in \{0, \dots, q-1\}$) that are picked from $k-1$ distinct parallel classes $\calP_{i_1}, \dots, \calP_{i_{k-1}}$. We assume that $i_1 < i_2 < \dots < i_{k-1}$. Let $\calI =  \{i_1, \dots, i_{k-1}\}$ and $\bfT_{\calI}$ denote the submatrix of $\bfT$ that corresponds to the rows indexed by $\calI$. In what follows, we show that the vector $[l_1~l_2~\dots~l_{k-1}]^T$ is a column in $\bfT_{\calI}$.

To see this first we consider the case that $\calI = \{1, \dots, k-1\}$. In this case, the message vector $\bfu = [l_1~l_2~\dots~l_{k-1}]$ is such that $[\bfu \bfG_{SPC}]_{\calI} = [l_1~l_2~\dots~l_{k-1}]^T$ so that $[l_1~l_2~\dots~l_{k-1}]^T$ is a column in $\bfT_{\calI}$. If $k \in \calI$, then we have $i_{k-1} = k$. It follows that the system of $k-1$ equations in variables $u_1, \dots, u_{k-1}$.
\begin{align*}
u_{i_1} &= l_1,\\
u_{i_2} &= l_2,\\
\mathrel{\makebox[\widthof{=}]{\vdots}}\\
u_{i_{k-2}} &= l_{k-2},\\
u_1 + u_2 + \dots + u_{k-1} &= l_{k-1},
\end{align*}
is such that it has $k-1$ variables and a unique solution over $\mathbb Z_q$. This gives us the required result.
\end{proof}

\bibliographystyle{IEEEtran}

\end{document}